\newcommand{\var}{\mathrm{var}}
\newcommand{\cov}{\mathrm{cov}}
\newtheorem{theorem}{Theorem}
\theoremstyle{definition}
\DeclareMathOperator*{\maxf}{maximize \quad}
\newcommand{\maxdisp}[3]{\begin{align*}\maxf_{#1} & #2\\\st  & #3\end{align*}}
\title{Deep Learning Partial Least Squares}
\author{	
	\makebox[.4\linewidth]{Nicholas Polson\footnote{Email: ngp@chicagobooth.edu}}\\
	\textit{\small  Booth School of Business}\\
	\textit{\small  University of Chicago}\\
	\and 
	\makebox[.4\linewidth]{Vadim Sokolov}\\
	\textit{\small  Department of Systems Engineering }\\
	\textit{\small  and Operations Research}\\
	\textit{\small  George Mason University}\\
	\and
	\makebox[.4\linewidth]{Jianeng Xu}\\
	\textit{\small  Booth School of Business}\\
	\textit{\small  University of Chicago}\\
}
\begin{document}
\maketitle

\begin{abstract}

\noindent 
High dimensional data reduction techniques are provided by using partial least squares within deep learning. Our framework provides a nonlinear extension of PLS together with a disciplined approach to feature selection and architecture design in deep learning. This leads to a statistical interpretation of deep learning that is tailor  made for predictive problems. We can use the tools of PLS, such as scree-plot, bi-plot to provide model diagnostics. Posterior predictive uncertainty is available using MCMC methods at the last layer. Thus we achieve the best of both worlds: scalability and fast predictive rule construction together with uncertainty quantification. Our key construct is to employ deep learning within PLS by predicting  the output scores as a deep learner of the input scores. As with PLS our X-scores are constructed using SVD and applied to both regression and classification problems and are fast and scalable. Following \cite{frank_statistical_1993}, we provide a Bayesian shrinkage interpretation of our nonlinear predictor. We introduce a variety of new partial least squares models: PLS-ReLU, PLS-Autoencoder, PLS-Trees and PLS-GP. To illustrate our methodology, we use simulated examples and the analysis of preferences of orange juice and predicting wine quality as a function of input characteristics. We also illustrate Brillinger's estimation procedure to provide the feature selection and data dimension reduction.  Finally, we conclude with directions for future research. 

\bigskip
\noindent {\bf Key Words:}  Deep Learning, Partial Least Squares, PLS-ReLU, PLS-Autoencoder, PLS-Trees, PLS-GP,  Principal Component Analysis, Nonlinear Classification, Trees, Gaussian Processes, Bayesian learning, Shrinkage, Dimension Reduction

\end{abstract}

\newpage

\section{Introduction}


Many high dimensional regression problems  involve identifying a functional that maps a high-dimensional input variable $x \in R^{p}$ to an output variable $y \in R^q$,
\[
y = \phi(x) + \epsilon.
\]
The function $\phi$ is estimated using the observed matrices of inputs $X\in R^{n \times p}$ and corresponding outputs $Y \in R^{n \times q}$. A high-dimensional input $x$ naturally arise in many applications, such as marketing, genetics and robotics. For example, in business applications, it is typical to increase the dimensionality of the input $x$  by introducing new hand-coded predictors \citep{Hoadley2000}.  This expanded input can include terms such as interactions, dummy variables or nonlinear functionals of the original predictors.  

Many high dimensional statistical models require data dimension reduction methods. Following \cite{breiman2001statistical} we represent data as being generated by a black box with an input vector $x$. The black box generates output $y$, or a predictive distribution over the output $p(y \mid x)$ that describes uncertainty in predicting $y$ from $x$. \cite{fisher1922mathematical} and \cite{cook2007fisher} describe clearly the issue of dimension reduction. Although it was typical to find predictors via screening and plotting them against the output variable \cite{Li2007}, Fisher argued with this approach and said in a classic empirical analysis, ``The meteorological variables to be employed must be chosen without reference to the actual crop record". Many researchers \citep{cox68} since then showed counterexamples to Fisher's approach, when dimensionality reduction techniques that do not use the output variable lead to poor results \cite{fearn_misuse_1983}. In particular, the output variables should be used for dimension reduction when the input variables are controlled by the experimenter \citep{cox2000theory}.

Partial least squares (\cite{wold_soft_1975} and \cite{wold_collinearity_1984}) has been applied in many fields, in particular, chemo-kinetics, but has received little attention in machine learning. On the other hand, nonlinear dimensionality reduction techniques for feature selection such as deep learning are prevalent. Our goal here is to incorporate deep learning into partial least squares to provide a framework for statistical interpretation and predictive uncertainty in high dimensional problems. Thus we provide a very flexible nonlinear extension of PLS that merges the two cultures of black box and statistical modeling as proposed by \cite{breiman2001statistical}. 
In doing so, we introduce a variety of new partial least squares models: PLS-ReLU, PLS-Autoencoder, PLS-Trees and PLS-GP.

Our work builds on the literature of nonlinear PLS \citep{wold_collinearity_1984,Fra94,malthouse1997nonlinear,Eriksson2009,liu2019dynamic}. Deep Learners model $\phi$  as a superposition of univariate affine functions  \citep{polson_deep_2017} and are universal approximators. Whilst Gaussian Process \citep{gramacy2008bayesian,higdon2008computer} are also universal approximators and can capture relations of high complexity, they typically fail to work in high dimensional settings. Tree methods can be very effective in high dimensional problems. Hierarchical  models are flexible stochastic models but require high-dimensional integration and MCMC simulation. Deep learning, on the other hand, is based on scalable fast gradient learning algorithms such as Stochastic Gradient Descent (SGD) and its variants.  Modern computational techniques such as automated differentiation (AD) and accelerated linear algebra (XLA) are available to perform stochastic gradient descent (SGD) at scale \citep{polson2020deep}, thus avoids the curse of dimensionality by simply pattern matching and using interpolation to predict in other regions of the input space. User friendly libraries such as  TensorFlow or PyTorch provide high level interfaces and  make deep learning modeling accessible without requiring to implement any of the low level linear algebra routines. The algorithmic culture has achieved much success in high dimensional problems. 

Rather than using shallow additive architectures as in many statistical models, deep learning uses layers of semi affine input transformations to provide a predictive rule. Applying these layers of transformations leads to a set of attributes (a.k.a features) to which predictive statistical methods can be applied. Thus we achieve the best of both worlds: scalability and fast predictive rule construction together with uncertainty quantification.  
Sparse regularisation with un-supervised or supervised learning finds the features. We clarify the duality between shallow and wide models such as PCA, PPR and deep but skinny  architectures used in deep learning such as auto-encoders, MLPs, CNN, and LSTM. The connection with data transformations is of practical importance for finding good network architectures. 

There are a number of advantages of our methodology over traditional deep learning. First, PLS uses linear methods to extract features and weight matrices. Due to a fundamental estimation result of \cite{brillinger_generalized_2012}, we show that these estimates can provide a stacking block for our DL-PLS model. Moreover, traditional diagnostics (bi-plots, scree-plots) to diagnose the depth and nature of the activation functions allow us to find a parsimonious model architecture. On the theoretical side, we follow \cite{frank_statistical_1993} and provide a Bayesian shrinkage interpretation of our model. This allows us to discriminate between unsupervised and supervised learning method.

On the empirical side, we illustrate our methodology  using simulated examples and analysis of preferences of orange juice data and predicting wine quality as a function of input variables. We also illustrate the novel estimation procedure suggested by Brillinger to provide the feature selection and data dimension reduction. 
\cite{lindley1968} provides an alternative fully Bayesian decision theoretic approach to select the number of components in the model. 
\cite{hahn_partial_2013} provide a full Bayes analysis of stochastic factor models. 

The rest of the paper is outlined as follows.  Section 1.1 provides definitions, notation and connections with previous work.
Section 2 describes partial least squares from the perspective of Bayesian shrinkage. Section 3 provides our general modeling strategy of deep learning within partial least squares. 
We provide a nonlinear extension of partial least squares by using deep layer to model the relationship between the scores generated in PLS. 
Section 4 provides applications,  We illustrate a key theoretical result of Brillinger on how to estimate deep layered models and show how to recover ReLU and 
tanh activation functions.  The orange juice dataset is re-analysed and we find evidence for nonlinearity in the scores architecture. 
Section 5 concludes with directions for future research.

\subsection{Connection with Previous Work} 


From a probabilistic view point, it is natural to view input-output pairs as being generated from some joint distribution
\[
(y_i,x_i) \sim p(y,x), \; i=1,\dots,N.
\]
The joint distribution $ p( y , x) $ is characterised by its two conditionals distributions 
\begin{itemize}
\item  $ p( y \mid x)  $, which can be used for prediction $ \hat{y} = E(y\mid x) $ together with full predictive uncertainty quantification using the 
probabilistic structure.
\item  $ p(  x \mid y ) $.  This conditional is  high-dimensional and we need to perform dimension reduction and find an efficient set of nonlinear features to be used as predictors in step 1. Selection using sparsity and deep learners are typical methods used.
\end{itemize}

In our PLS-DL model we use a hierarchical linear regression model as the last layer. This allows us to perform a full Bayes inference at the level, namely calculating the predictive $ p( y \mid x ) $  helps with assessing prediction uncertainty. Using stochastic regressors allows us to use inference methods proposed by \cite{brillinger_generalized_2012}.

There are several types of \textit{projection-based dimensionality deduction techniques in regression}. All of those rely on finding orthogonal projections of columns of $X$ and (possibly) $Y$ into a lower dimensional liner space. Essentially, the problem us to find a new basis and to find the projections into this basis. The basis vectors are called \textit{loadings} and the projection is called the \textit{score}. 

Principal Components is arguable the most widely used dimensionality reduction technique in high-dimensional settings. The dimensionality reduction is achieved by changing a basis and using singular vectors of the input matrix $X$ as new basis vectors. The singular vector decomposition $X = USV^T$, where $V$ is $p \times p$ orthogonal matrix, $U$  is $n \times n$ orthogonal and $S$ is $n \times p$ diagonal matrix, provides the new set of basis vectors (directions) $V$ and the projections into this new basis $XV = US$ are called the principal components (scores). Then dimensionality reduction is achieved by projecting inputs into space spanned by the first $L<p$ right singular vectors $V_L = (v_1,\ldots,v_L)$ which correspond to the $L$ largest singular values $s_1\le s_2,\ldots,\le s_L$. The resulting projection leads to the best $L$-rank approximation of the input matrix $X$, $X \approx U_LS_LV_L^T$ \citep{golub2013matrix}.

Principal component dimension reduction is independent of $y$ and can easily discard information that is valuable for predicting the desired output. Sliced inverse regression (SIR)~\citep{li1991sliced,cook2009regression,jiang2014variable} overcomes this drawback and assumes that conditional distribution can be indexed by $L$ linear combinations
\[
p(y\mid x) = p(y\mid Wx),
\]
and by computing the set of $L$ basis vectors (rows of $W$) using both $x$ and $y$.

\textit{Partial Least Squares} (PLS) also uses both $X$ and $Y$ to calculate the projection, further it simultaniously finds projections for both input $x$ and output $y$, it make it applicable to the problems with high-dimensional output vector as well as input vector. 
Let $Y$ be $ n  \times q$ matrix of observed outputs and $ X $ be an $ n \times p$ input matrix. PLS finds a projection directions that maximize covariance between $X$ and $Y$, the resulting projections $U$ and $T$ for $Y$ and $X$, respectively are called  score matrices, and the projection matrices  $P$ and $Q$ are called loadings. The $X$-score matrix $T$, has $L$ columns, one for each ``feature" and $L$ is chosen via cross-validation. The key principle is that $T$ is a good predictor of $U$, the $Y$-scores. This relations are summarized by the equations below
\begin{align*}
	Y &  =  U Q+ E \\
	X & =  T P + F
\end{align*}
Here $Q$ and $P$ are orthogonal projeciton (loading) matrices and $T$ and $U$ are $n\times L$ are projections of $X$ and $Y$ respectively.

Originally PLS was developed to deal with the problem of collinearity in observed inputs. Although, principal component regression also addresses the problem of collinearity, it is often not clear which components to choose. The components that correspond to the larges singular values (explain the most variance in $X$) are not necessarily the best ones in the predictive settings. Also ridge regression addresses this problem and was criticized by  \citep{fearn_misuse_1983}. Further ridge regression does not naturally provide projected representations of inputs and outputs that would make it possible to combine it with other models as we propose in this paper. Thus, PLS seem to be the right method for high-dimensional problems when the goal is to model non-linear relations using another model. 

In the literature, there are two types of algorithms for finding the projections \citep{manne_analysis_1987}. The original one proposed by \cite{wold_collinearity_1984} which uses conjugate-gradient method \citep{golub2013matrix} to invert matrices. The first PLS projection $p$ and $q$  is found by maximizing the covariance between the $X$ and $Y$ scores \maxdisp{p,q}{\left(Xp\right)^T\left(Yq\right)}{||p||=||q||=1.} Then the corresponding scores are
\[
t = Xp,\text{   and   } u = Yq
\]
We can see from the definition that the directions (loadings) for $Y$ are the right singular vectors of $X^TY$ and loadings for $X$ are the left singular vectors. The next step is to perform regression of $T$ on $U$, namely $U = T\beta$.  The next column of the projection matrix $P$ is found by calculating the singular vectors of the residual matrices $(X - tp^T)^T(Y - T\beta q^T)$.

The final regression problem is solved $Y = UQ = T\beta Q = XP^T\beta Q $. Thus the PLS estimate is
\[
\beta_{\mathrm{PLS}} = P^T\beta Q.
\]

\cite{helland_partial_1990} showed that PLS estimator can be calculated as 
\[
	\beta_{\mathrm{PLS}} = R(R^TS_{xx}R)^{-1}R^TS_{xy}
\]
where $R = (S_{xy},S_{xx}S_{xy},\ldots,S_{xx}^{q-1}S_{xy})$,
\[
	S_{xx} = \dfrac{X^T(I-{\bf 11^T}/n)X}{n-1},
\]
\[
	S_{xy}  = ave ( y \bm x ). 
\]

Further, \cite{helland_structure_1988} proposed an alternative algorithm to calculate the parameters\\
For $K = 1 , \ldots, p$, set $ y_0 = y ,   \bm x_0 = \bm x $. Let  $ (y, \bm x )$ be centered and standardised. Given $ V = ave ( \bm x \bm x^T  ) $ and $ \bm s = ave ( y \bm x ) $,
for $K= 1 $ to $p$ do 
\begin{enumerate}
\item $  \bm s_k = V^{K-1} \bm s  $ 
\item $ \hat{y}_K = OLS ( y \; {\rm on} \; ( \bm s_k^T \bm x )_{k=1}^K ) $
\end{enumerate} 

\cite{stone_continuum_1990} discuss PLS and cross-validation.

\paragraph{Non-Linear Models and Ridge Functions}
The main building component of a deep learning model is a ridge function, which finds one dimensional-structure in the data \citep{pinkus2015ridge}. Deep learning is then a composition of several ridge functions to identify non-linear low-dimensional structure present in high-dimensional data.  

A ridge function is a multivariate function $f: \mathbb{R}^n \rightarrow \mathbb{R}$ of the form $f(x) = g(w^Tx) $ where $g$ is univariate function and $x,w \in \mathbb{R}^n$. The non-zero vector $w$ is called the direction. The name ridge comes from the fact that the function is constant along directions that are orthogonal to $w$, for a direction $\eta$ such that $w^T\eta = 0$, we have
\[
f(x+u) = g(w^T(x+u)) = g(w^Tx) = f(x)
\]
Arguably, this is the simplest multivariate function. \cite{klusowski2016risk} provided bounds for the bias when noisy data is modeled by smooth ridge functions. 

A generalization of a ridge function is a function of the form
\[
f(x) = g(Wx),
\]
where $W$ is a non-zero $d \times n$ matrix of weights and univariate function $g$ is applied element-wise. Deep learning is nothing but a composition of such generalized ridge functions.

Ridge functions are used in high-dimensional statistical analysis. For example, projection pursuit algorithm approximates the input-output relations using a linear combination of ridge functions \citep{friedman1981projection,huber1985projection,jones1992}
\[
\phi(x) = \sum_{i=1}^{r}g_i(w_i^Tx),
\]
where both the directions $w_i$ and functions $g_i$ are variables and $w_i^Tx$ are one-dimensional projections of the input vector. Projection pursuit regression is a dimensionality reduction technique. The vector $w_i^Tx$ is a projection of the input vector $x$ onto a one-dimensional space and $g_i(w_i^Tx)$ can be though as a feature calculated from data. Projection pursuit algorithm is an iterative procedure that finds a predictive function iteratively by starting with $\phi_0 = 0$ and proceeds
\[
w_{i+1} \in \arg\min_w \lVert \phi_{i+1}  - y\rVert_p;~~\phi_{i+1} = \phi_i + g_{i+1}(w^Tx).
\]

\textit{Nonlinear PLS} was first considered by \cite{kramer1991nonlinear} and \cite{malthouse1997nonlinear} who proposed projecting (non-orthogonally) the input variables into an $L$-dimensional surface  which is parametrized by a neural network. Those projections, which are equivalent of scores vectors in PLS are then used for the prediction of the output. \cite{rosipal2001kernel} develops a kernel PLS algorithm which is a non-linear regression in high dimensions. First, they find a transformation of the input vector $x \rightarrow \Phi(X)$ and then use PLS to map $\Phi(X)$ to $y$. The features $\Phi(x)$ are calculated as using a kernel function that depends on both $x$ and $y$. It can be shown that kernel regression is similar to a radial basis function network \cite{haykin_neural_1998}.

\paragraph{Deep learning} DL overcomes many classic drawbacks by \textit{jointly}
estimating non-linear function that does dimensionality reduction and prediction using information on $Y$ and $X$ as well as their relationships, and by using $L>2$ layers.

A deep learning is simply a composition of ridge functions. The prediction rule is embedded into a parameterised deep learner, a composite of univariate semi-affine functions, denoted by
$G_W$ where $ W = [ w^{(1)}, \ldots , w^{(L)} ] $ represents the weights of each layer of the network.  A deep learner takes the form of a composition of link functions
$$
G_W = G_1 \circ \dots \circ G_L  \; {\rm where} \; G_L = g_L ( w_L x + b_L) 
$$
where $ g_L $ is a univariate link function.

If we choose to use nonlinear layers, we can view a deep learning routine as a hierarchical nonlinear factor model or, more specifically, as a generalised linear model (GLM) with recursively defined nonlinear link functions,  see \cite{polson_deep_2017} and \cite{tran2020bayesian} for further discussion.

\cite{diaconis1984nonlinear} use  nonlinear functions of linear combinations. The hidden factors $Z_i= W_{i1}X_1 + \ldots + W_{ip}X_p$ represents a data reduction of the output matrix $X$. The model selection problem is to choose how many hidden units. We will show that PLS scree-plot can be used as a guide to find $L$. Predictive cross-validation can also be sued.

We now turn to estimation of single index models with non-Gaussian regressors. 

\subsection{Projection to Single Index Model}

\cite{brillinger_generalized_2012} considers the single-index model with non-Gaussian regressors where  $(Y,X) $ are stochastic with conditional distribution
$$
Y \mid X \sim  N(g ( \alpha + \beta X ),~\sigma^2).
$$
Here $\beta  X  $ is the single features found by data reduction from high dimensional $ X$. 
Let $ \hat{\beta}_{OLS} $ denote the least squares estimator which solved $X^TY = X^TX\beta$.  By Stein's lemma, 
$$
\cov(Y,X)  = \beta \cov ( g ( \alpha + \beta X ) , \alpha + \beta X ) \var(X) / \var( \alpha + \beta X ) 
$$
Then $ \hat{\beta} $ is consistent as
$$
\hat{\beta}_{OLS} = \cov( Y,X) / \var(X) \rightarrow k \beta \;  \; {\rm where} \; \; k = \cov ( g ( \alpha + \beta X ) , \alpha + \beta X ) \var(X) / \var( \alpha + \beta X ) 
$$
Hence,  $ \hat{\beta}_{OLS} $ estimator is proportional to $ \beta $. We can also non-parametrically estimate $g(u)$ by plotting 
$ ( \hat{\beta} x_j , y_j  ) , ~j = 1 , \ldots n $ and smoothing $y_j$ values with $  \hat{\beta} x_j $ near $ u $. 

Hence, when the $X$s are Gaussians  and independent of the error, we have the relationship $ \cov(Y,X) = k \beta \var(X) $. This relationship follows from the weaker assumption
$$
E( Y|X ) = g ( \alpha + \beta X ) 
$$
Hence, this approach can be applied to binary classification with $ Prob ( Y=1 | X ) = g( \alpha + \beta X ) $ and other models such as survival models.

\cite{Naik2000} apply this to partial least squares. \cite{hall1993} show that when dimension of $X$ is high then most nonlinear regressions are still nearly linear. 

The following theorem extends Brillinger's result to deep learning with ReLU activation.
\begin{theorem}
Given a deep learning model of the form
$$
E( Y|X ) = G_1 \circ \dots \circ G_{L}(X)
$$
where $G_1(Z) = B_1Z$ and $G_l(Z) = \max(B_lZ, 0)$ for $ l=2,3,..., L$. $B_L\in\mathbb{R}^{D\times p}$ and $B_l \in \mathbb{R}^{D\times D}$ for $l=1,2,..., L-1$. $X\in\mathbb{R}^p$ and $Y\in\mathbb{R}^D$.

Then we can consistently estimate $ ( B_1 , \ldots , B_L ) $ recursively using OLS.
\end{theorem}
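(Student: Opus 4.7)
The plan is to estimate the layers one at a time by repeated application of Brillinger's vector-valued generalization, starting with the innermost transformation $B_L$ and proceeding outward to the linear output layer $B_1$. At each stage the remaining upstream composition is treated as a multivariate ``link function'' and the next weight matrix is read off from the OLS regression of $Y$ on the features constructed so far.

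For the first stage, write $f(X) := G_1\circ\cdots\circ G_L(X)$ and assume $X$ is (approximately) Gaussian as in Brillinger and Hall. The multivariate Stein identity gives
$$
\cov(Y,X) \;=\; E\bigl[\nabla_X f(X)\bigr]\,\var(X),
$$
and the chain rule yields $\nabla_X f(X) = B_1\,D_2(X)\,B_2\cdots D_L(X)\,B_L$, where each $D_\ell(X)$ is the diagonal matrix of ReLU subderivatives at the pre-activations of layer $\ell$. Hence the OLS coefficient matrix satisfies $\hat B_{\mathrm{OLS}} \to M_L B_L$ with $M_L := E\bigl[B_1 D_2(X) B_2 \cdots D_L(X)\bigr]$. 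Under a mild non-degeneracy condition guaranteeing full row rank of $M_L$ (no row of $B_L$ is annihilated in expectation by the upstream layers), $\hat B_{\mathrm{OLS}}$ consistently identifies $B_L$ up to an invertible left transform, which is the natural vector-valued analogue of Brillinger's ``direction up to scalar'' conclusion.

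Next I would iterate: form $\hat Z_{L-1} := \max(\hat B_L X,\,0)$, regress $Y$ on $\hat Z_{L-1}$ to read off $\hat B_{L-1}$, then set $\hat Z_{L-2} := \max(\hat B_{L-1}\hat Z_{L-1},\,0)$, and so on, up to the linear final regression for $B_1$. Because the constructed features beyond the first stage are no longer Gaussian, the full Stein derivation is unavailable and I would instead invoke the weaker regression-function condition $E(Y\mid \hat Z_\ell) = g(\alpha + \beta\hat Z_\ell)$ that Brillinger explicitly sanctions, together with Hall's observation that in high dimension such nonlinear regressions are nearly linear along the active directions. The scalar ambiguity inherited at each step is absorbed into the subsequent layer through the positive homogeneity of ReLU, $\max(cZ,0)=c\max(Z,0)$ for $c>0$.

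The principal obstacle is that $\max(\cdot,0)$ does not commute with left multiplication by a general invertible matrix, so the identifiability of each $B_\ell$ is really only up to a row-wise positive rescaling rather than the arbitrary invertible transform suggested by the Stein calculation. Making the recursion rigorous therefore requires either imposing a canonical normalization on the rows of each $B_\ell$ (e.g., unit norm) so that the recovered quantities match the true ones exactly in the limit, or interpreting ``consistency'' in the row-space sense and letting downstream layers absorb any residual transformation during their own OLS fit. Either route stacks Brillinger estimators into the recursive OLS procedure claimed by the theorem.
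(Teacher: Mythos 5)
Your recursion is exactly the paper's: regress $Y$ on $X$ to pick up $B_L$ via Brillinger, form $\hat Z_{L-1}=\max(\hat B_L X,0)$, regress $Y$ on $\hat Z_{L-1}$ for $B_{L-1}$, and absorb the inherited scaling into the next layer using positive homogeneity of the ReLU. Where you differ is in how honestly you track the proportionality factor. The paper simply asserts, citing Brillinger, that $\hat B_L \approx k_L B_L$ with $k_L$ a \emph{diagonal} matrix, and diagonality with positive entries is precisely what licenses the commutation $\max(k_L B_L X,0)=k_L\max(B_L X,0)$ and hence the clean induction $\hat B_{l-1}\approx k_{l-1}k_l^{-1}B_{l-1}$. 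Your multivariate Stein computation shows the limit is really $M_L B_L$ with $M_L = E\bigl[B_1 D_2(X)B_2\cdots D_L(X)\bigr]$, which has no reason to be diagonal when the upstream link is a genuine composition of layers rather than a coordinatewise map; you correctly identify that a general invertible $M_L$ does not pass through $\max(\cdot,0)$, so identifiability degrades to row-wise positive rescaling at best. In that sense your write-up surfaces the gap that the paper's proof papers over, and your proposed fixes (a canonical row normalization, or reading consistency in the row-space sense with downstream OLS absorbing the residual transform) are the additional hypotheses one would actually need to make the theorem's conclusion precise. Neither you nor the paper fully closes this; but your version at least states the non-degeneracy and normalization conditions under which the recursive OLS scheme is consistent, whereas the paper's diagonal-$k_L$ claim is left unjustified.
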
 
\begin{proof}
Define $Z_L := X, \; Z_0 :=Y$ and $Z_{l-1} := G_l(Z_{l})$ for $l=1,2,3,..., L$. First,  we write the deep learning model as
$$
E( Y|X ) =  G^{(L)} ( B_{L}  Z_L  ) \; \; {\rm where} \; \;G^{(L)} = G_1 \circ \dots \circ G_{L-1} \circ \max(\cdot, 0).
$$
Running OLS of $ Y $ on $X$, allows using Brillinger's result to consistently estimate $ B_L $. That is, we get $\hat B_L \approx k_L B_L$ where $k_L$ is a diagonal matrix. Then construct
$$
\hat Z_{L-1} = \max(\hat B_L X, 0) \approx \max(k_L B_L X, 0) = k_L Z_{L-1}
$$
and run OLS of $Y$ on $\hat Z_{L-1}$. Note that now the corresponding model becomes
$$
E(Y|X) = G^{(L-1)}(B_{L-1}Z_{L-1}) = G^{(L-1)}\left(k_{L}^{-1}B_{L-1}\cdot k_LZ_{L-1}\right)\; \; {\rm where} \; \;G^{(L-1)} = G_1 \circ \dots  \circ G_{L-2} \circ \max(\cdot, 0).
$$ 
According to Brillinger, the resulting OLS estimate $\hat B_{L-1}$ satisfies $\hat B_{L-1} \approx k_{L-1}k_{L}^{-1} B_{L-1}$. And correspondingly, 
$$\hat Z_{L-2} = \max(\hat B_{L-1}\hat Z_{L-1}, 0) \approx \max(k_{L-1} B_{L-1} Z_{L-1}, 0) = k_{L-1} Z_{L-2}.$$
Therefore, by induction,  we can estimate $ ( B_1 , \ldots , B_L ) $ recursively solely using OLS.
\end{proof}
\cite{Erdogdu2016} proposes an efficient algorithm for calculating the proportionality constant $k$.
The proportionality was first observed by Fisher for logistic regression with Gaussian design.
\cite{antoniadis2004bayesian} provide techniques for Bayesian estimation of the single-index models using B-spline approximate $g$ functions. 

\cite{xia2009adaptive} consider the generalized single index model
\[
Y = g(B X) + \epsilon
\]
where $B\in \mathbb{R}^{D\times p}$ is orthogonal matrix.  $B X$ is known as the effective dimension reduction (EDR)

Stein's lemma can also  be extended to scale mixtures of Gaussians (\cite{gron2012optimal}) and a similar analysis holds.

\section{Bayesian Shrinkage via Partial Least Squares}

\subsection{Ridge, PCR and PLS} 

Linear prediction rules without shrinkage, 
$ \hat{Y} = X\beta  $ where $ \beta = ( X^T X )^{-1} X^T Y  $, have the caveat of high variance due to the typical ill-conditioned nature of the 
$ ( X^T X ) $-design matrix.   Thus shrinkage methods are been developed. 

RR, PCR and PLS are all operationally the same.  They bias coefficient vector away from directions in which the predictors 
have low sampling variability--- or equivalently, sway from the "least important" principal components of $X$ 
\begin{enumerate}
	\item  With high collinearity in which variance dominates the bias, the performance of Ridge, PCR and PLS tend to be quite similar. The three methods shrink heavily along those directions of small predictor variability. They are also invariant under rotation, but not scaling.
	\item  Unlike Ridge and PCR, PLS biases the coefficient vector towards directions in the predictor space that preferentially predict the high spread directions in the response space. It not only shrinks the OLS in some eigen-directions, but expands it in others (related to $K$-th eigenvalue).
	\item For PLS, the scale factors along each of the eigen-directions are not linear in the response values (smooth but not monotonic). They also depend on the OLS solution (not the length, but just the relative values). 
	\item PLS typically uses fewer components to achieve the same overall shrinkage as PCR, generally about half as many components. If the number of component being equal, PLS has less bias and greater variance, compared with PCR.
	\item For multi-response problems, two-block PLS or the multi-response ridge doesn't dramatically do better than the corresponding uni-response procedures applied separately to the individual responses.
\end{enumerate}
As with any automatic procedure, there are caveats where certain empirical datasets can load on the eigen-rvectors with the smallest eigen-values,
see, for example, \cite{fearn_misuse_1983}, \cite{polson2012}. PLS can be viewed as an  data pre-processing step before implementing deep learning. 

\subsection{Bayesian Shrinkage Interpretation}

We will rely heavily on the Bayesian shrinkage interpretation of PCR and PLS due to Frank and Friedman (1993). \cite{polson2010,polson2012} provide a general theory of global-local shrinkage and, in particular,  
analyze $g$-prior and horseshoe shrinkage. PLS behaves differently from standard shrinkage rules as it can shrink away from the origin for certain eigen-directions. 

First, start with the SVD decomposition of the input matrix: $ X = U D W^T $. If $ n > p $ then of full rank.
Here $ D = diag ( d_1 , \ldots , d_p ) $ nonzero ordered $ e_1 > \ldots > e_p $ singular values.
Then $W = ( w_1 , \ldots , w_p ) $ is the matrix of eigenvectors for $ S = X^T X $. 
We can then transform the original first layer to an 
orthogonal regression, namely $ y = ( UD ) W^T \beta $ with corresponding  OLS estimator $ \hat{\alpha} = ( Z^T Z )^{-1} Z^T y = D^{-1} U^T y $.

\paragraph{Ridge, PCR and PLS Algorithm}
PCR has a long history in statistics. This is 
an unsupervised approach to dimension reduction (no $y$'s). 
Specifically, we first  center and standardise $ (y, \bm x )$.

Then, we provide an SVD  decomposition of 
$$ V = ave ( \bm x \bm x^T ) $$  
This find the eigen-values $ e_j^2 $ and eigenvectors arranged in descending order, 
so we can write 
$$ V = \sum_{j=1}^p  e_j^2 {\bm v}_j {\bm v}_k^T .
$$ 
This leads to a sequence of regression models $(\hat Y_0, ..., \hat Y_L)$ with $ \hat Y_0 $ being the overall mean and 
$$
\hat{Y}_L = \sum_{l=0}^L (ave ( w_l^T \bm x ) / e_l^2 ) \bm v_l^T \bm x 
$$
Therefore, PLS finds  "features"  $Z_K = \{\bm v_k^T \bm x\}_{k=0}^K = \{\bm f_k \}_{k=0}^K $.  

The key insight is that all of the estimators are of the from 
$$ 
 \hat Y^M = \sum_{j=1}^L  f_j^M  \hat{\alpha}_j \bm v_j^T \bm x 
$$
where $f_j$ are scale factors. $M$ denotes method (e.g. RR, PCR, PLS). $L$ is the rank of $\bm V$ (number of nonzero $e_k^2$). 
For PCR, the scale factors are  $f_j =1 $  for top $L$ eigenvectors  \cite{frank_statistical_1993} 
\begin{align*}
	f_j^{RR} &= e_j^2/(e_j^2 + \lambda ), \text{where $\lambda$ is a fixed regularization parameter}\\
	f_j^{PCR} &= \begin{cases}
	1, & e_j^2 \geq e_L^2\\
	0, & \text{otherwise}
	\end{cases}\\
f_j^{PLS} &= \sum_{k=1}^{K} \theta_k e_j^{2k} \; ,  \; {\rm where} \;  \theta = w^{-1}\eta ,  \;  \eta_k = \sum_{j=1}^p \hat{\alpha}_j^2 e_j^{2(k+1)} . 
\end{align*}
The corresponding shrinkage factors for RR and PCR are typically normalized so that they give the same overall shrinkage so that the length of solution vector are the same ($|\hat\beta_{RR}| = |\hat\beta_{PLS}|$).

This scale factors provide a diagnostic plot: $f_j $.  If any $ f_j > 1 $ then one can expect 
supervised learning (a.k.a. PLS with $Y$'s influence the scaling factors) will lead to different  predictions than   unsupervised learning (a.k.a. PCR with
solely dependent  on $X$).  In this  sense, PLS is an optimistic procedure in that the goal is  
to maximise the explained variability of the output in sample with the  hope of generalising well out-of-sample.

\paragraph{Model selection (a.k.a. dimension reduction)}
The goal of PCR is to minimize predictive MSE
$$
\hat{L} = \arg \min_L ave( y - \hat{y}_L )^2 
$$
The choice of $L$ is determined via predictive 
 cross-validation. The $l$th model is simple regression of $ y $ on $ f_L = \bm v_l \bm x , l = 1 , \ldots L $. 
\cite{mallows_comments_1973} $ C_p $ and $ C_L $ provide the relationship between shrinkage and model selection.

\paragraph{Dropout} This is a model selection technique designed 
to avoid over-fitting in deep learning.  This is done by  removing input dimensions in $X$ randomly with a given probability $p$. 
For example, suppose that we wish to minimise MSE,  $ \|Y-\hat{Y}\|^2_2$, then, when marginalizing over the randomness, we have a new objective
$$
{\rm arg \; min}_W \; \mathbb{E}_{ D \sim {\rm Ber} (p) } \Vert Y - W ( D \star X ) \Vert^2_2\,,
$$
This is equivalent to, with $ \Gamma = ( {\rm diag} ( X^\top X) )^{\frac{1}{2}} $, 
$$
{\rm arg \; min}_W \;  \Vert Y - p W X \Vert^2_2 + p(1-p) \Vert \Gamma W \Vert^2_2\,,
$$
Hence, this is equivalent to a Bayesian ridge regression with a $g$-prior as an objective function and reduces the likelihood of over-reliance on small sets of input data in training.

\section{Deep Learning via Partial Least Squares}
Partial least squares algorithm finds projections of the input and output vectors $X = TP+F$ and $Y = UQ+E$ in such a way that correlation between the projected input and output is maximized.  Our DL-PLS model will introduce nonlinearity $U  = G(T)$ by assuming that $U$ is a deep learner of $T$. From Brillinger's result and Theorem 1 we see that linear PLS calculates $T$ and $P$ for arbitrary $G_L$. 
\begin{align*}
 Y &  =  U Q+ E \\
 U & = G(T) \\
T & =  X P^T 
 \end{align*}
where $G$ is a deep learner. 
Here $ X = T P + F $ is inverted to $ T = X P^T $ as $ P ^T P = I $.

\begin{figure}[H]
	\centering
	\includegraphics[width=\linewidth]{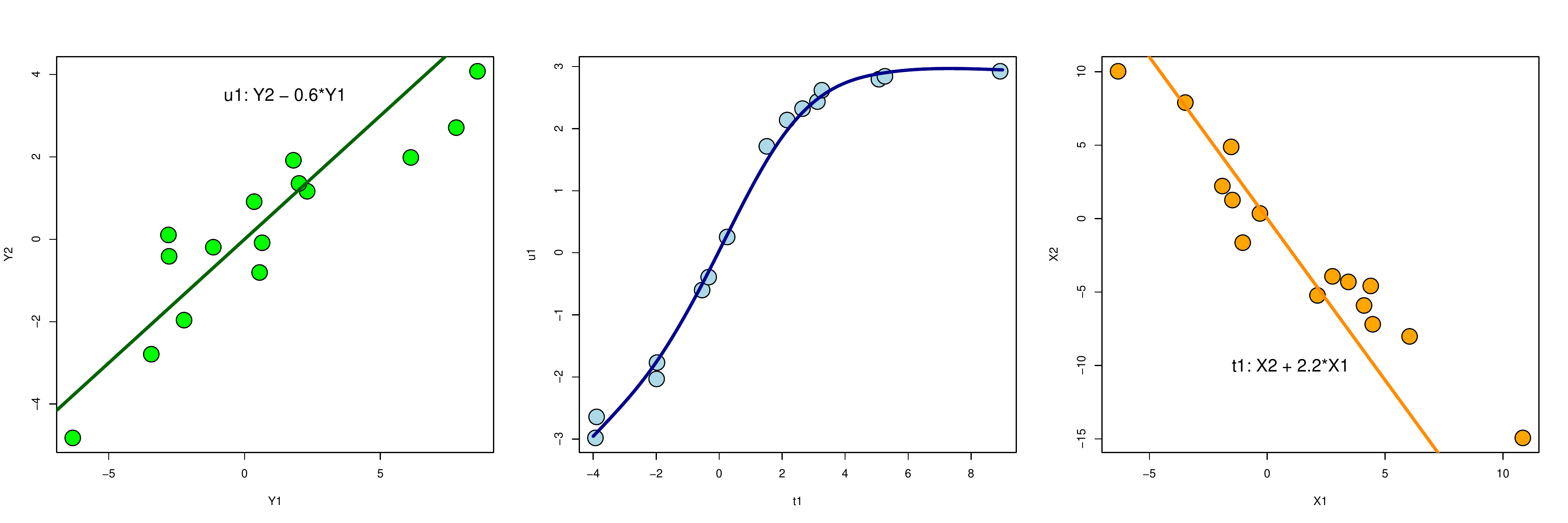}
	\caption{Relationship between output-input $Y$ and $X$\\
		$Y = U Q + E, \; \; U = G ( T) , T = X P^T $}\label{xyut}
\end{figure}

Figure \ref{xyut} shows the relationship between $X$ and $Y$, via the link between score vectors. $t_1$ and $u_1$ are linear combinations of $X$ and $Y$ respectively (left and right subplots). The relationship between $t_1$ and $u_1$ is nonlinear (middle subplot).

Although we use a composite model DL-PLS, our estimation procedure is two-step. We first estimate the score matrices and then estimate parameters of the deep learning function $G$. This two step process is motivated by the \cite{brillinger_generalized_2012}. The results of Brillinger guarantee that  matrices $P, Q$ are invariant (up to proportionality) to nonlinearity, even when the true relationship between  $Y$-scores and $ X $-scores is nonlinear.

There are many choices for the network architecture of the deep learner $G$. 
\paragraph{PLS-ReLU}
For example, $G$ is a simple feed-forward ReLU neural network, for which sparse Bayesian priors are useful to improve the generalization (\cite{polson2018posterior}). $U$ and $T$ are $n\times L$ matrices, 
\begin{align*}
	T &= Z_0\\
	Z_1 &= \max(Z_0W_1+b_1,0)\\
	U &= Z_1W_2+b_2
\end{align*}
The weights $W_1$ and $W_2$ are to be learned.
 
\paragraph{PLS-Autoencoder}
As the input $T$ and output $U$ have the same dimensions, one can consider using an autoencoder network. We build on the architecture of \cite{zhang2019improved}. 
\begin{align*}
	T &= Z_0\\
	Z_1 &= g_1(Z_0W_1+b_1)\\
	U &= g_2(Z_1W_2+b_2)
\end{align*}
where $Z_1$ is a $n\times l_1$ matrix which acts as a lower dimensional intermediate hidden layer and $l_1 \ll l_0 = L$.
\paragraph{PLS-Trees}
One approach is Eriksson, Trygg and Wold. The first score vector, $t_1 = Xw$ where $w= X^Ty/\Vert X^Ty\Vert^2$. After sorting all observations from maximum to minimum score value along $t_1$, they divides $X$ and $Y$ in two parts to minimize 
$$
(1-b)\cdot \left(a\cdot\frac{Var(y_1)+Var(y_2)}{Var(y)} + (1-a)\cdot \frac{Var(t_1)+Var(t_2)}{Var(t)}\right) + b\cdot\frac{(N_1-N_2)^2}{(N_1+N_2)^2}
$$
where $Var$'s are appropriate variances and $N$'s are sample sizes. The last term $\frac{(N_1-N_2)^2}{(N_1+N_2)^2}$ acts as a regularization penalty. The hyperparameters $a$ and $b$ are chosen by cross-validation.

\paragraph{PLS-GP}
Given a new predictor matrix $X_{*}$ of size $N_*\times p$, the same projection $P$ produces the corresponding $T_{*} = [t_{1,*}, ..., t_{L, *}]$. We can use Gaussian process regression to predict $U_{*} = [u_{1,*}, ..., u_{L, *}]$ from $T_{*}$ as follows
\begin{align*}
	T_{*} &= X_{*} P^T\\
	\hat u_{k, *} &= g_{k,*}(t_{k, *}), k=1,2,...,L.
\end{align*}
where $g_{k,*}$'s are the Gaussian process regression predictors. 

\begin{equation*}
\begin{bmatrix}
u\\
u_*
\end{bmatrix}\sim N\left(
\begin{bmatrix}
g\\
g_*
\end{bmatrix},
\begin{bmatrix}
K & K_*\\
K_*^T & K_{**}
\end{bmatrix}\right)
\end{equation*}
where $K = K(t, t)$ is $N\times N$, $K_* = K(t, t_*)$ is $N\times N_*$, and $K_{**} = K(t_*,t_*)$ is $N_*\times N_*$. $K(\cdot,\cdot)$ is a kernel function. The conditional mean, $g_*$, is given by
\begin{align*}
g_*(t_*) &= g(t_*) + K_*^TK^{-1}(u - g(t)). 
\end{align*}
Then prediction of $Y$ is
\begin{equation*}
	\hat Y_* = \hat U_{*} Q
\end{equation*} 
\cite{Fadikar_2018} use PCA to reduce the 57-dimensional output vector together with Gaussian process regression. Alternatively, one could perform partial least squares to reduce the dimensionality to provide a better calibration of the simulation-based model.

To summarize,
\vspace{-0.15in}
\paragraph{Algorithm} 
Our overall procedure includes 5 steps
\begin{enumerate}
	\item Given input $X$, expand by including modeler-defined features to get an augmented set of inputs $\phi(X)$, see \cite{Hoadley2000}.
	\item Use SVD to find eigenvalues-eigenvectors of $V = ave(\phi(X)\phi(X)^T)$
	\item Use PLS to estimate weight matrices $Q,P$ and $Y$-scores $U$. Select dimensionality $L$ via cross-validation and scree-plot. Let $T$ denote $X$-scores
	\item Model $U = G(T)$ where $G$ is a deep learner. Use diagnostics plot to interpret the model.
	\item Run MCMC on model $Y = UQ + \epsilon$ where $U = G(T)$ and $T = XP^T$ to obtain predictive distribution $p(Y \mid X)$
\end{enumerate}

\section{Applications}
We use two simulated and two real-world data sets to study the performance of PLS-DL in estimating the input-output relations. In our first two simulated examples we can see that estimated $\beta_{\mathrm{PLS}}$ provides a mechanism to select the number of latent dimensions and to plot the input-output relations in lower dimensional space.

\subsection{Simulated Data: ReLU and tanh}
To illustrate Brillinger's method for identifying the activation function, $g$, we simulate data from two neural networks with ReLU and tanh activation functions. 
Asymptotically $ \hat{\beta}_{OLS} $ estimates $ \beta $ up to proportionality and estimates of  $g(u)$ can be obtained by plotting 
$ ( \hat{\beta} x_j , y_j  ) , j = 1 , \ldots n $ and smoothing $y_j$ values with $  \hat{\beta} x_j $ near $ u $. 

We set $n=50$ and $\beta \in \mathbb{R}^{20}$. Figure \ref{brillinger beta} plots the OLS $ \hat{\beta} $'s versus the true $ \beta$'s. We see clearly that we can estimate the $\beta$'s up to proportionality.
Figure \ref{brillinger_activation} plots  $ ( \hat{\beta} x_j , y_j  ) , j = 1 , \ldots n $  and show how we can recover ReLU and tanh fucntions albeit with noise. The scaling factor between OLS $\hat\beta$'s and the true $\beta$'s is estimated together with the activation function. That is, if $\hat\beta_{OLS} \approx k \beta$, then $\hat g(u) \approx g(\frac{1}{k}\cdot u)$.

\begin{figure}[H]
	\centering
	\includegraphics[width=0.75\linewidth]{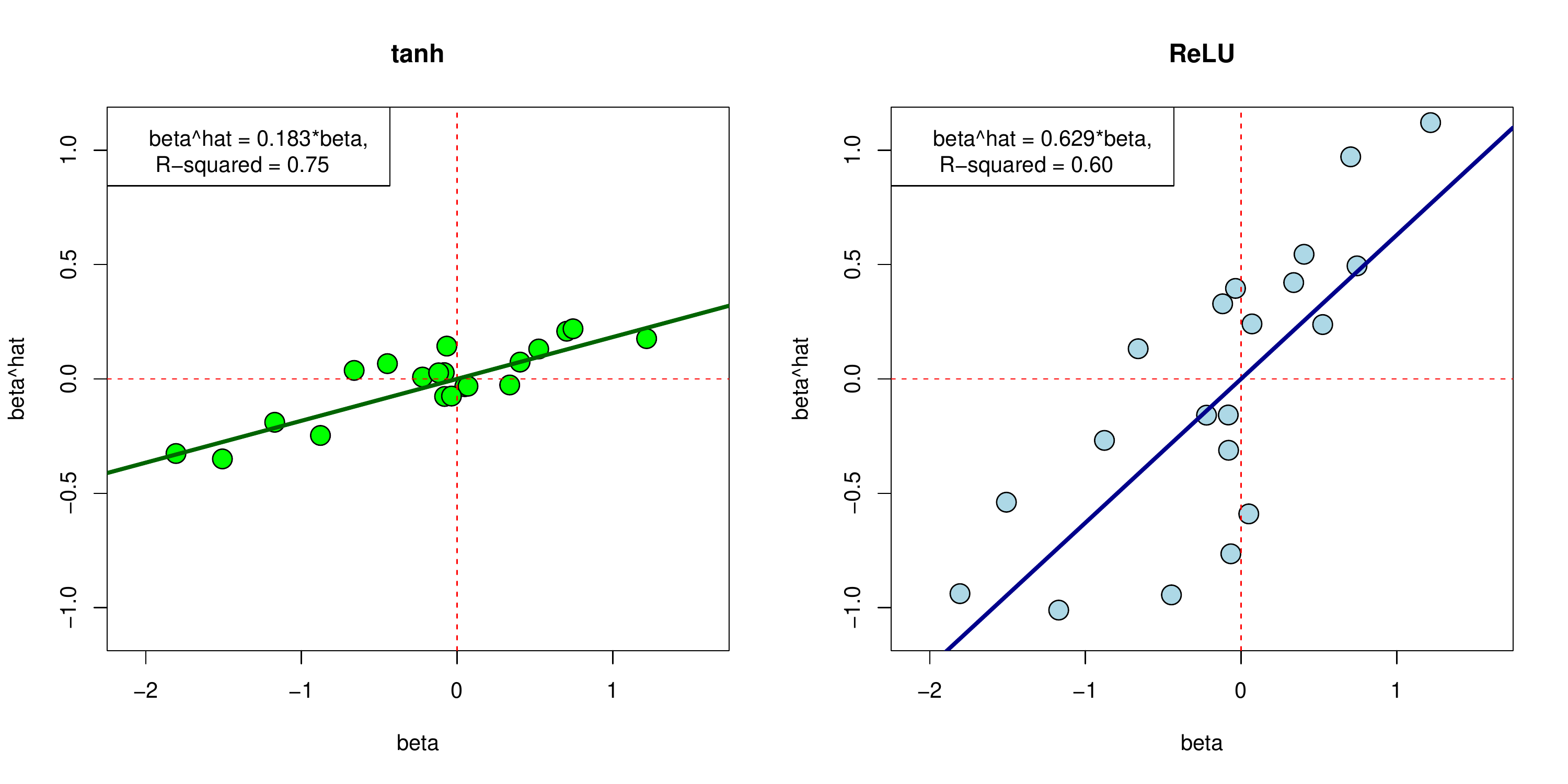}
	\caption{OLS $\hat \beta$ vs. true $\beta$: the regression model $\hat\beta \sim \beta$ is shown in the top-left corner.}\label{brillinger beta}
\end{figure}

\begin{figure}[H]
	\centering
	\includegraphics[width=0.75\linewidth]{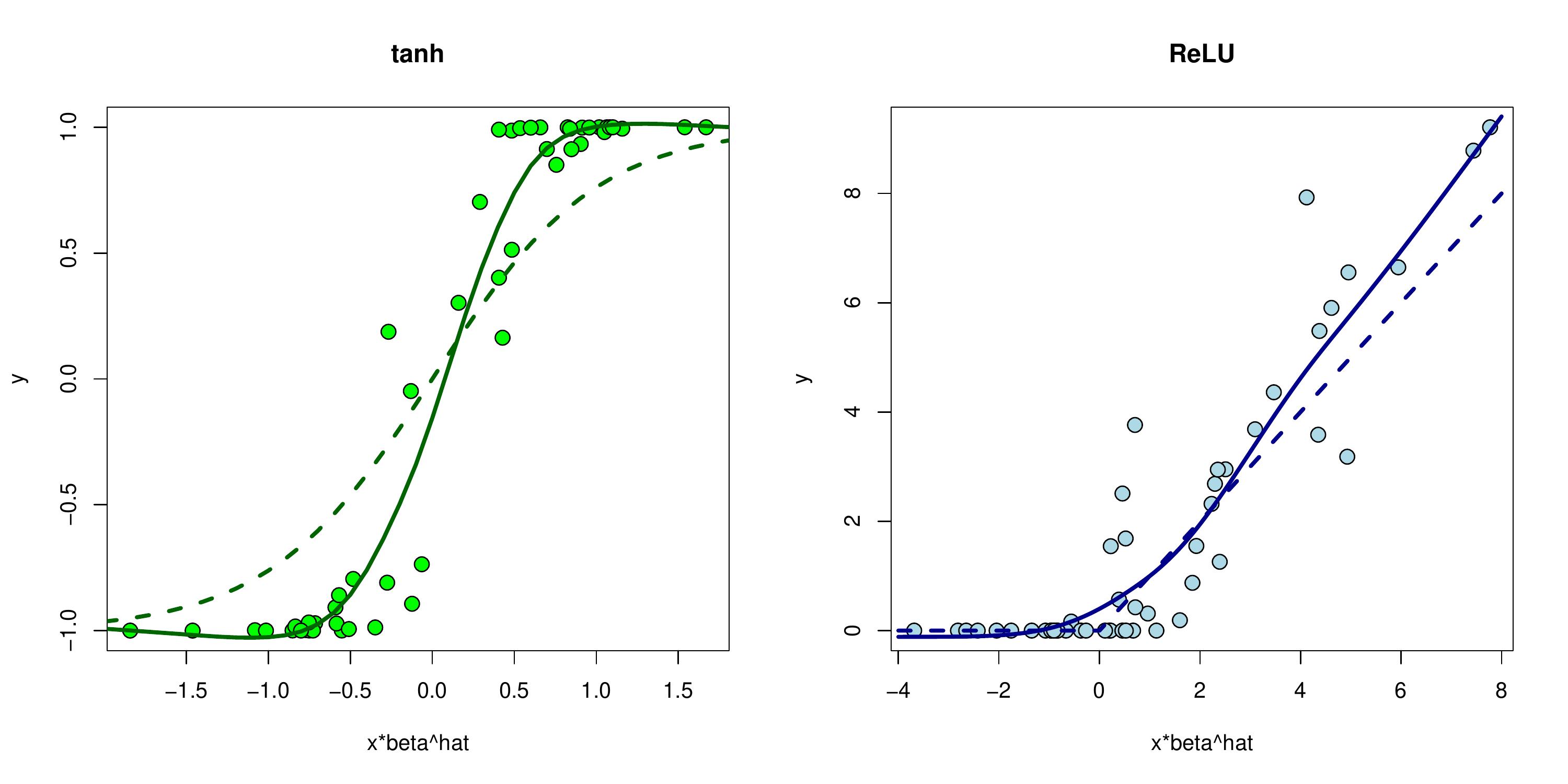}
	\caption{$y$ vs. $\hat\beta x$: the solid lines are estimated activation $\hat{g}$'s and the dashed lines are true $g$'s.}\label{brillinger_activation}
\end{figure}

\subsection{Simulated Data with Independent Predictors}
We start with the case when predictors are independent and PLS estimates are the same as OLS. We start with a non-linear function 
\[
y = \log |1+ Bx| + \epsilon
\]
where $B = (1,2,0,0)$ and $x\sim N(0,I_{4\times 4})$ and $Var(\epsilon) = 0.005$.
\begin{figure}[H]\label{fig:naik}
	\centering
	\includegraphics[width=\linewidth]{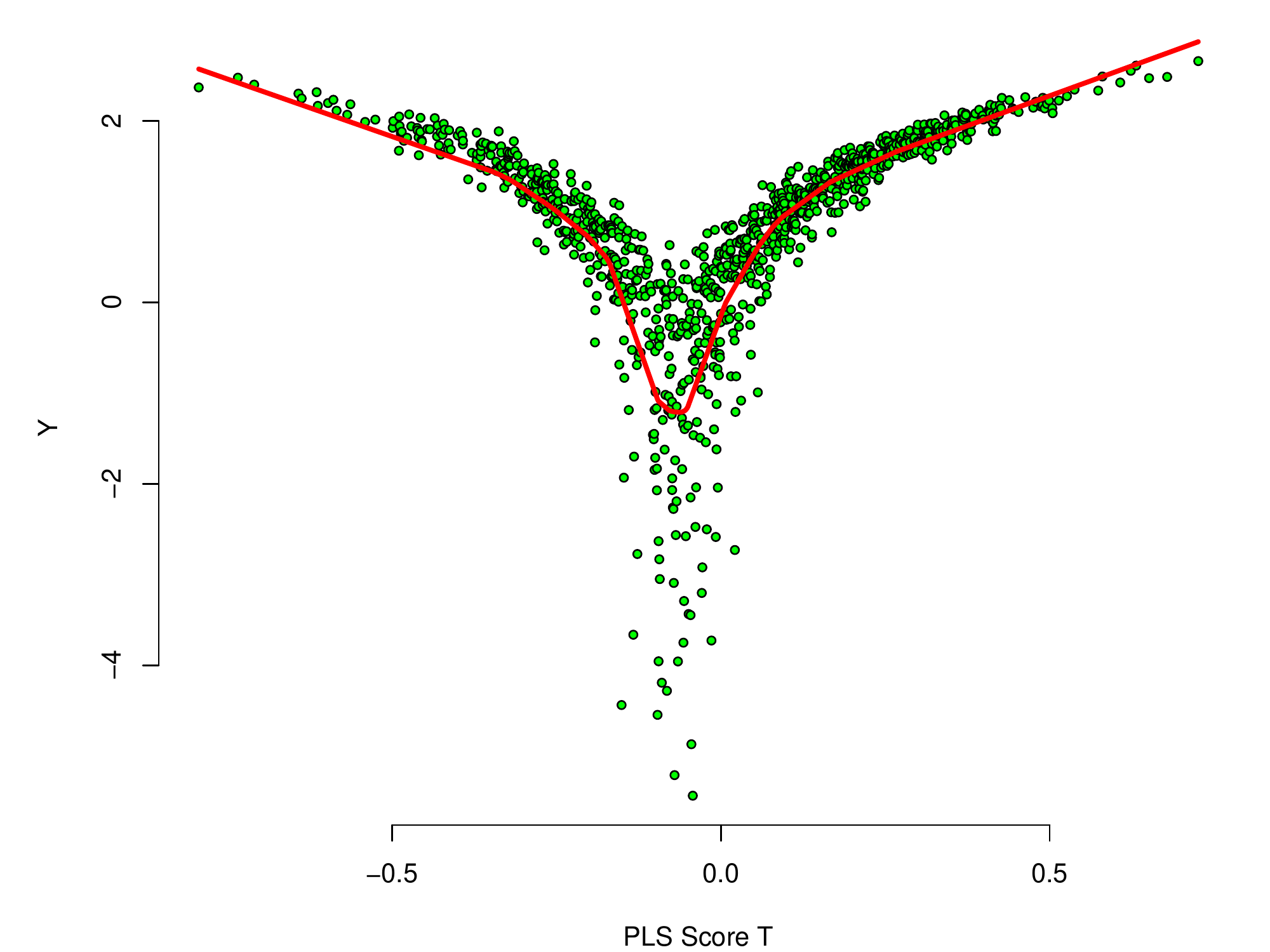}
	\caption{The training data set generated using the function $y = \log |1+ x^TB| + \epsilon$  (green) and curve fitted between first score and $y$ using neural network (red) }
\end{figure}

We further extend this example by adding another output 
\begin{align*}
	y_1 = &  B_1x + \epsilon_2	\\
	y_2 = & \log |1+ B_2x| + \epsilon_2
\end{align*}
where $B_1 = (0,0,2,2)$, $B_2 = (1,2,0,0)$ , and $Var(\epsilon) = \left(\begin{array}{cc}0.001 & 0.005 \\ 0.005 & 0.001
\end{array}\right)$

The estimated coefficients by the PLS algorithms are shown in Table \ref{tab:pls} and we can see that PLS estimated correctly $B_1$ since there is a linear relation between $B_1x$ and $y$ and the estimates of $B_2$ are proportional to the true values. 
\begin{table}[H] \centering 
\caption{Coefficients estimated by the PLS algorithm} 
\label{tab:pls} 
\begin{tabular}{@{\extracolsep{5pt}} cc} 
\toprule
$\hat \beta_1$ & $\hat \beta_2$\\ \hline
$0.003$ & $0.138$ \\ 
$0.001$ & $0.216$ \\ 
$2.002$ & $0.034$ \\ 
$2.004$ & $0.003$ \\  \bottomrule
\end{tabular} 
\end{table}

Then we fitted a neural network that takes two inputs, which are first two scores calculated by PLS and uses original $y$ as an output. The neural network correctly learns the identity relation between the first score and $y_1$ and also correctly learns the non-linear part of the relation as shown in Figure \ref{fig:naik2d}.

\begin{figure}[H]
\centering
\begin{tabular}{cc}
\includegraphics[width=0.45\linewidth]{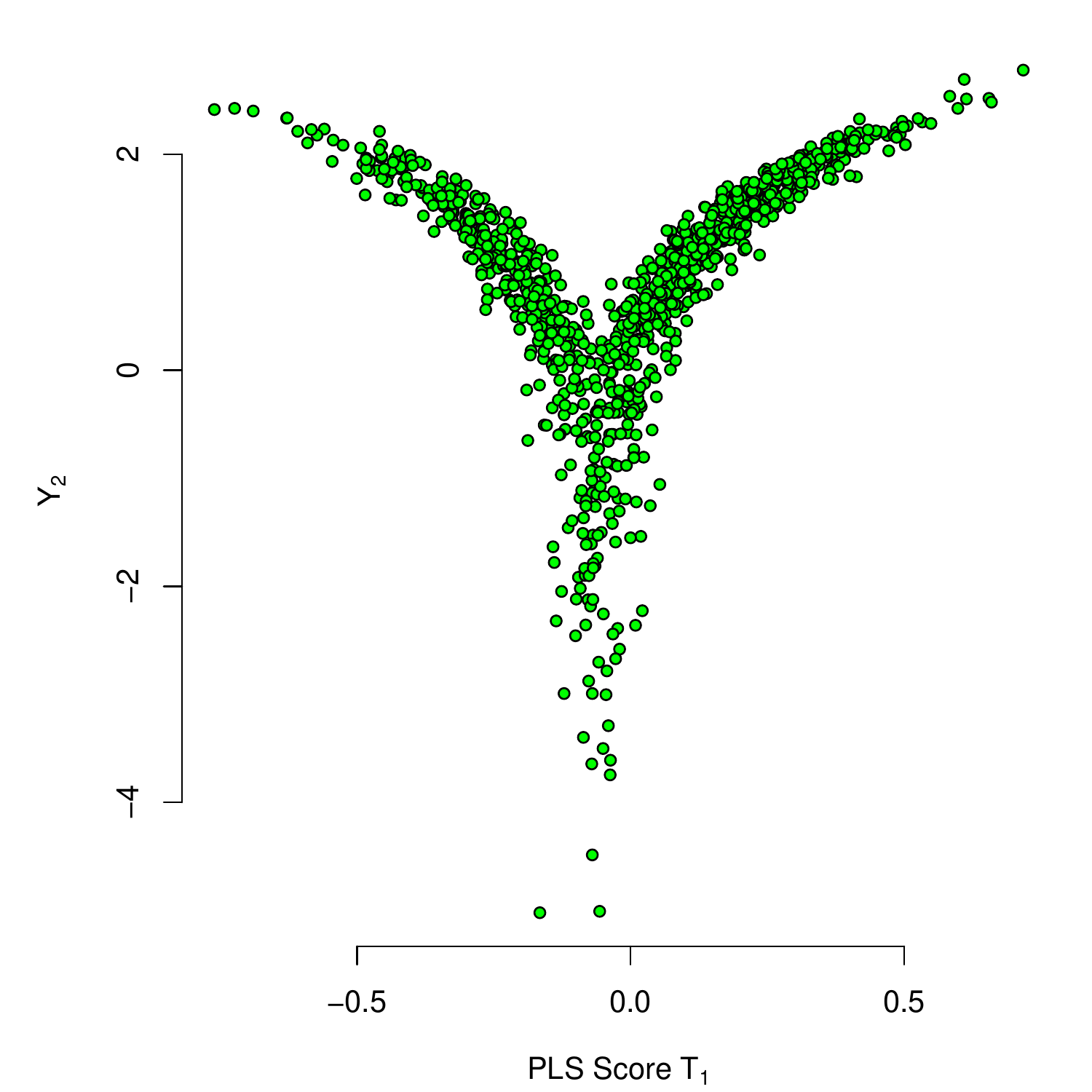} & \includegraphics[width=0.45\linewidth]{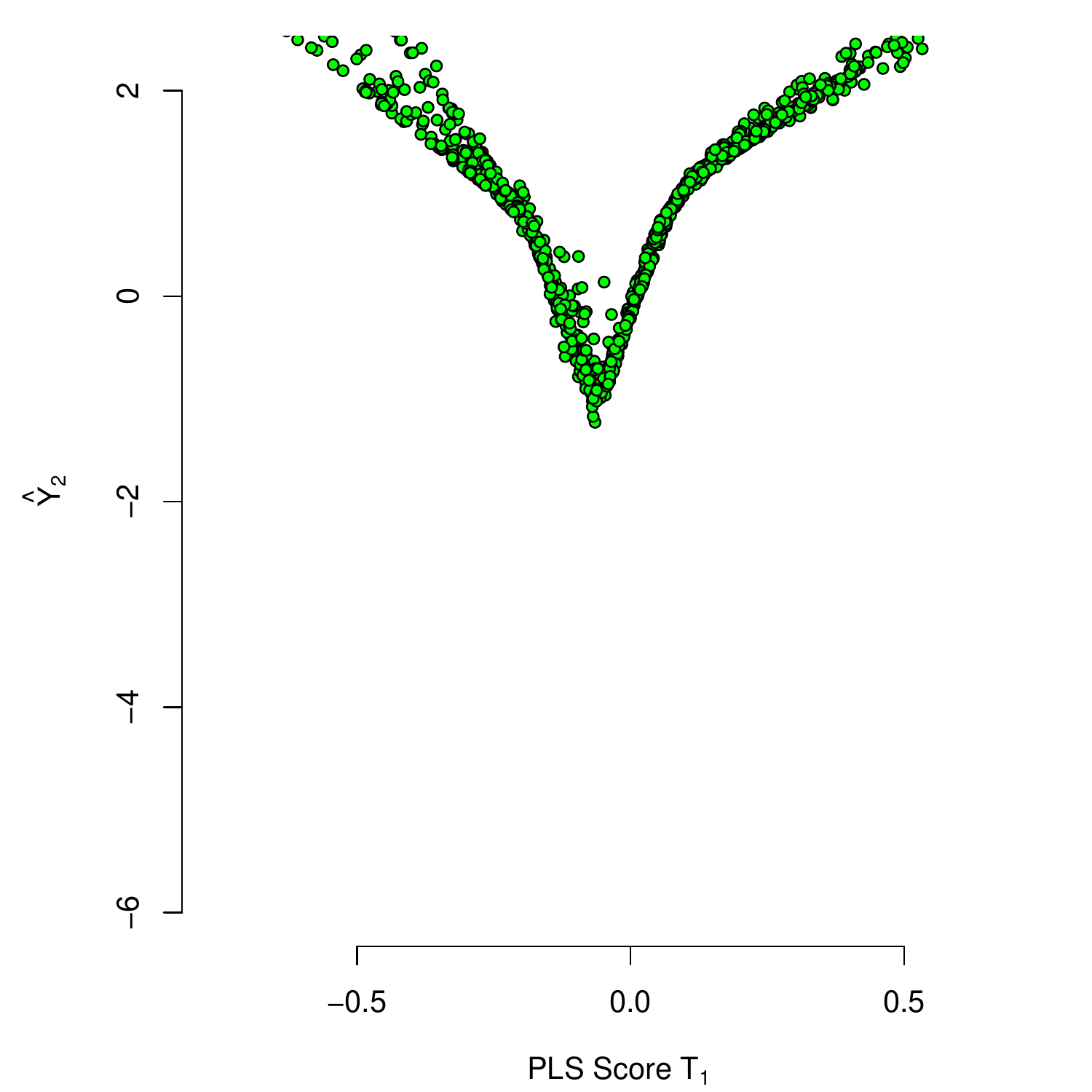}\\
(a) Training Data & (b) Fitted Data
\end{tabular}
\caption{Training versus PLS-DL fitted data}\label{fig:naik2d}
\end{figure}

\subsection{Orange Juice Data}

\cite{tenenhaus_pls_2005a} applies PLS to the orange juice data. There are $q$ hedonic judgements and $p$ product characteristics. In the orange juice dataset, we have 
$q=96$ judges and $ p=16$ predictors. In the so-called external analysis, a principal component analysis is carried out on $X$.  Then
the $Y$ are regressed on the selected principal components after predictive cross-validation has been used to select the number of 
principal components.  For example, with $L=2$. The predictive rule underlying PLS is achieved as a superposition (as in deep learning) of the 
PLS components $(t_1 , t_2 ) $  of the PLS regression of $Y$ on $X$ and the coefficient vector $ (p_1 , p_2 ) $ of 
the regression of $X$ on $(p_1 , p_2 ) $.  Characteristics are summarised by $ (t_1 , t_2 ) $  and $ \hat{Y}_k = c_{1k} t_1 + c_{2k} t_2 $.

Figure \ref{diagnosis} shows some diagnostic plots for our DL-PLS model. The left panel represents the reduction in unexplained variance by increasing the number of components. This screeplot suggests that first three components can explain almost all variation. The middle panel is a biplot for $X$ with respect to the first two components, using SVD of $X$. Samples and variables are displayed together, where the distances measure the similarity between samples/variables. For example, one observes that glucose and fructose are highly correlated. And another group of variables, including sucrose, ph1, ph2, citric, smell.int, taste.int, odor.typi, acidity, bitter and sweet, are also close to each other. The right panel of Figure \ref{diagnosis} is the result of applying Brillinger on $T$ and $U$. The nonlinear function $g$ is fitted with a smoothing spline. 

\begin{figure}[H]
	\centering
	\includegraphics[width=0.4\linewidth]{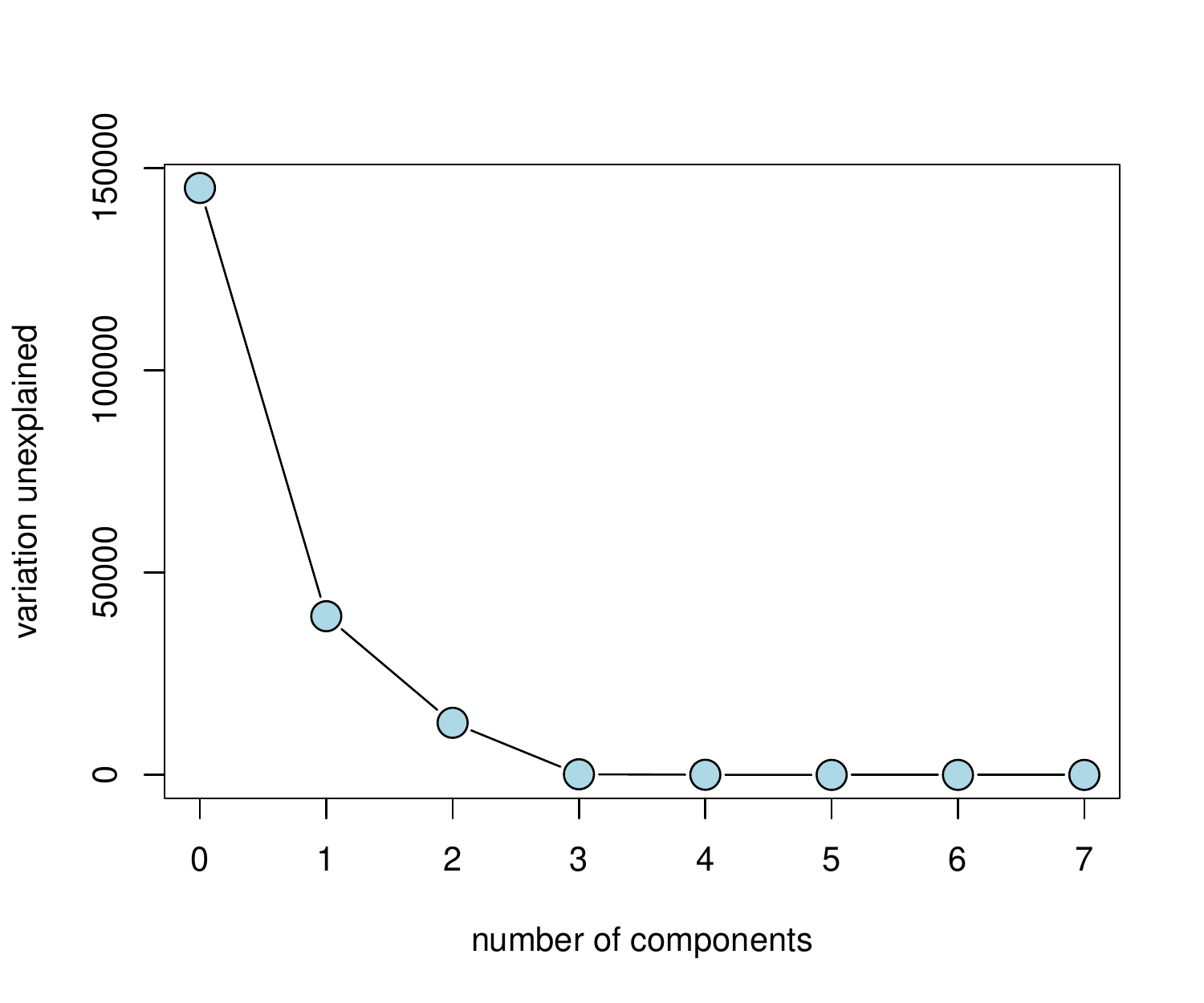}
	\includegraphics[width=0.4\linewidth]{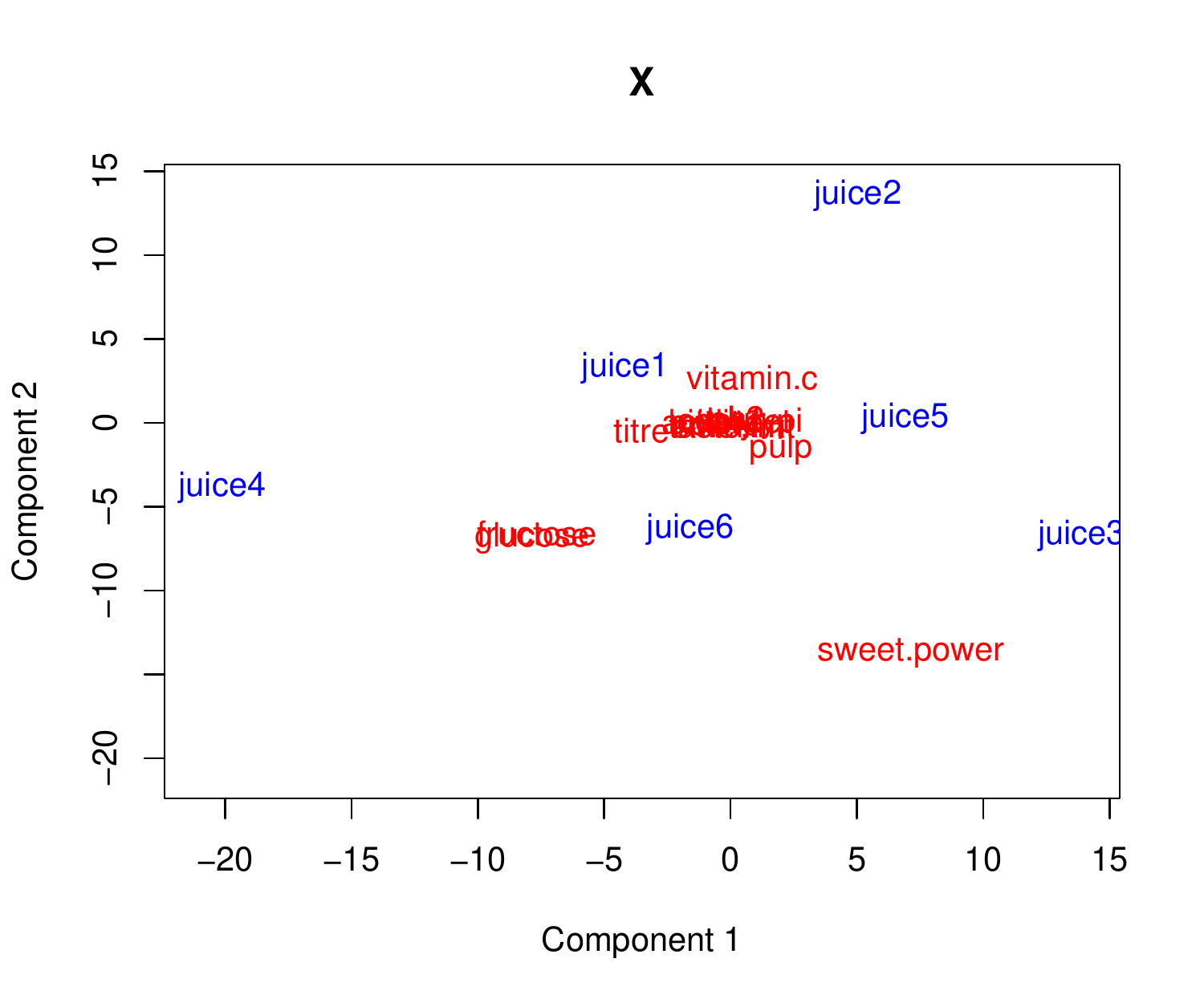}
	\includegraphics[width=0.4\linewidth]{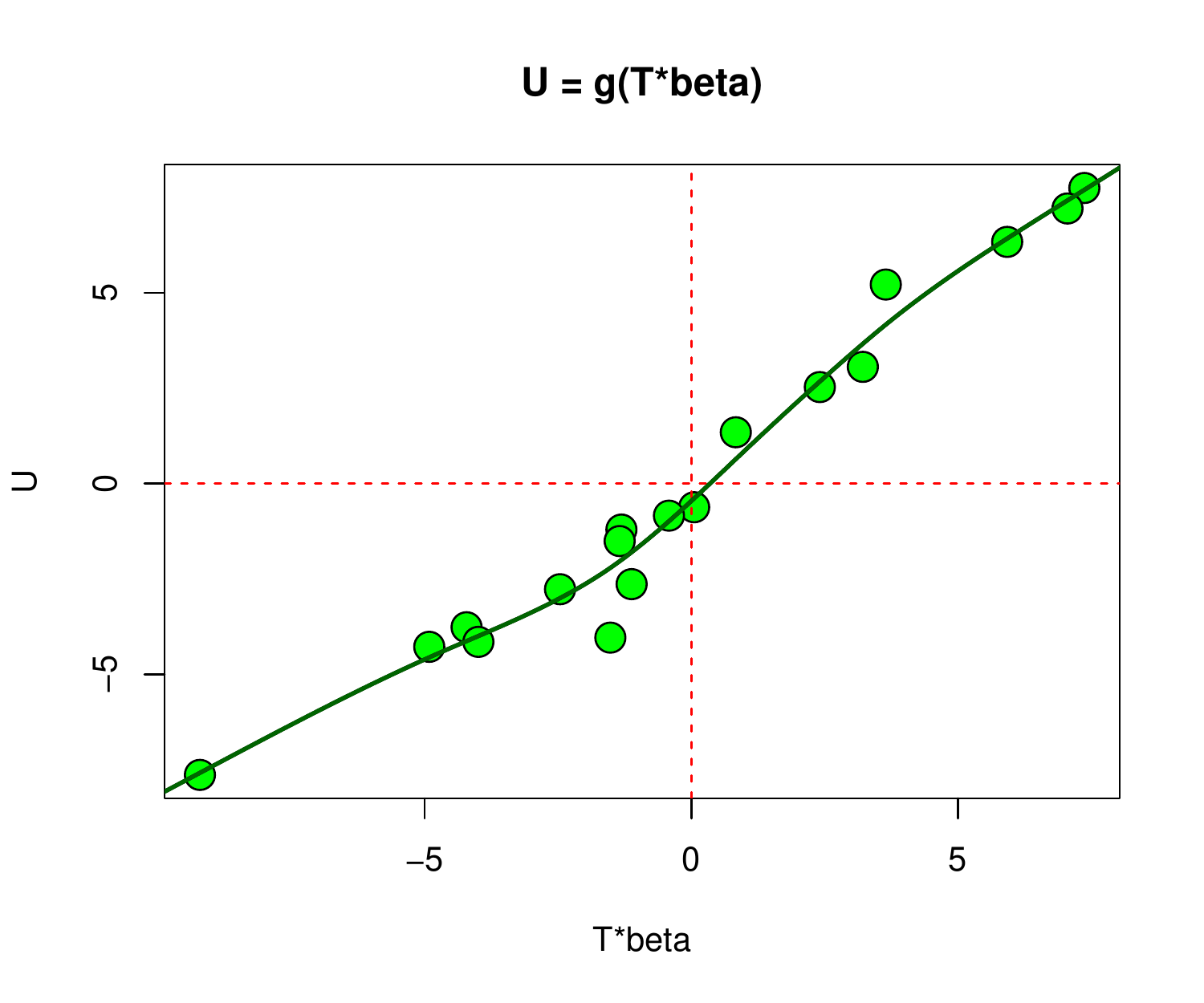}
	\caption{Left: screeplot using singular values of $X^TY$. Middle: Biplot for $X$. Right: Brillinger to find the relation between $T$ ($X$-scores) and $U$ ($Y$-scores). $\beta$ is a diagonally dominant matrix.}\label{diagnosis}
\end{figure}

Figure \ref{correlation} displays $X$ and $Y$ in a correlation circle (the outer circle has radius 1 and the inner one has radius 0.5). For each point, the coordinates in the plot are its correlations with the first two PLS components. This plot also indicates the correlation between variables. But unlike the biplot of $X$ in Figure \ref{diagnosis}, the variable locations are now determined by both $X$ and $Y$. We see that the large group of variables in the $X$-biplot are now separated into multiple subgroups. For example, sucrose, ph1, ph2, sweet, odor.typi are still close to each other, while bitter, acidity and citric form another subgroup.
\begin{figure}[H]
	\centering
	\includegraphics[width=0.5\linewidth]{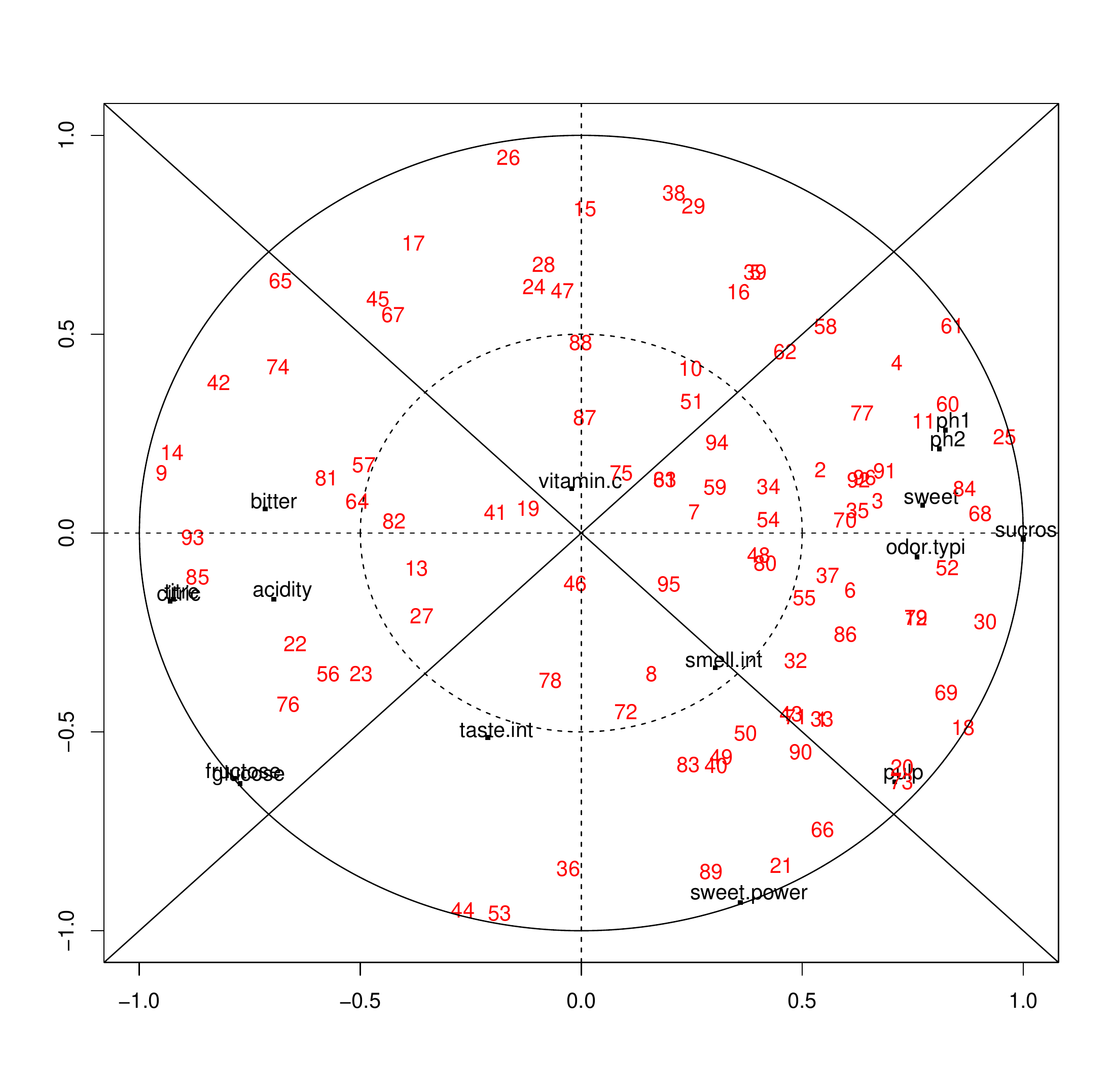}
	\caption{The correlation between $X$ ($Y$) and the first two PLS components $t_1, t_2$.}\label{correlation}
\end{figure}

Figure \ref{scaling} exhibits the scaling factor $f_j$ versus the number of components, where solid lines denote PLS and dotted ones are PCR. The factors for the first four judges are shown in the plot. For PLS, the number of components is 2. Hence the second eigen-direction is expanded and $f_j^{PLS} >1$ accordingly. The number of components for PCR is chosen so that the overall shrinkage are as close as possible. 
\begin{figure}[H]
	\centering
	\includegraphics[width=0.65\linewidth]{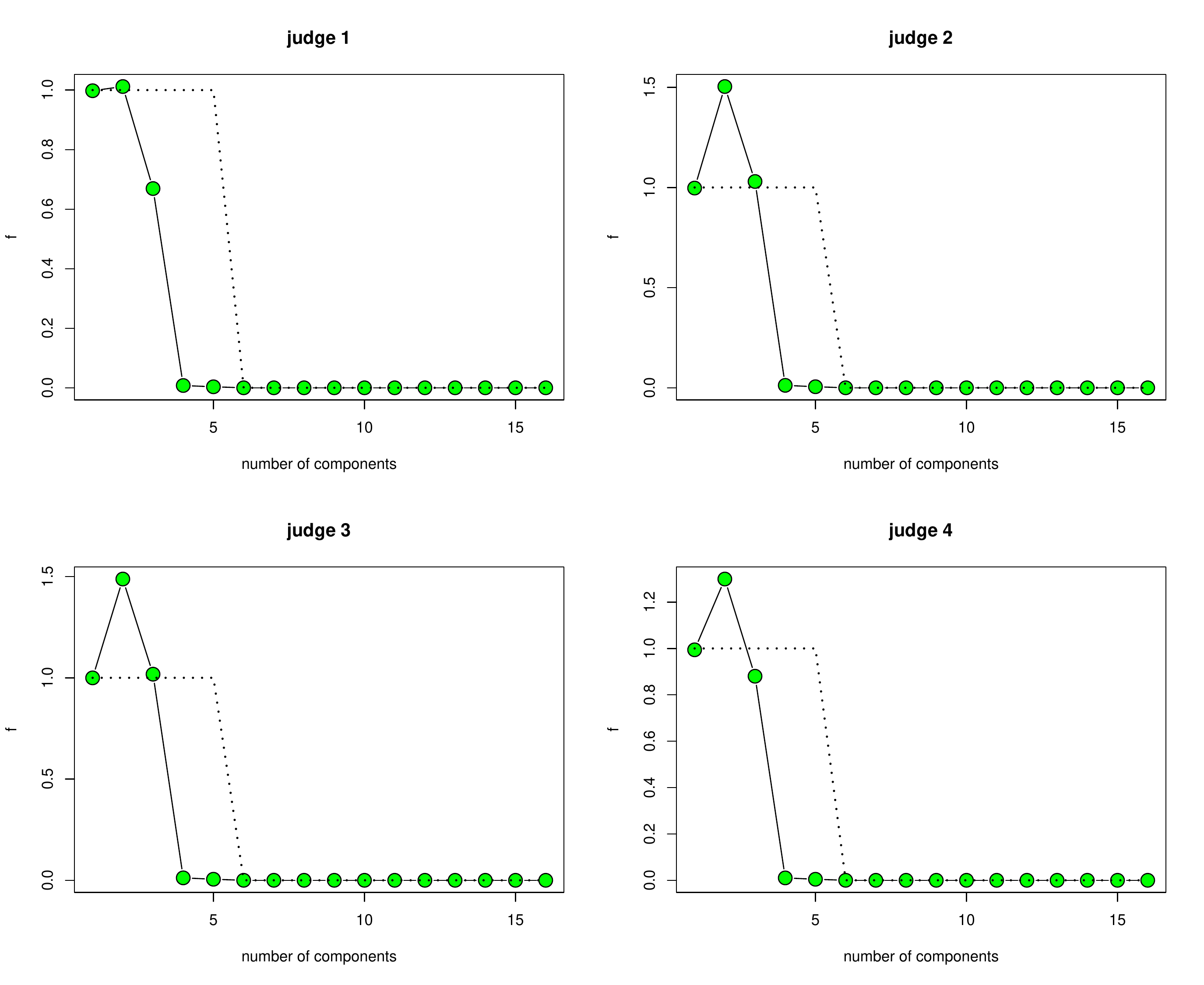}
	\caption{Scaling factor $f_j$ of PLS (solid lines) and PCR (dotted lines) for the first four judges.}\label{scaling}
\end{figure}

To predict $Y$ from its score vectors $U$ with a linear model, we assume independent and identical normal prior, $N(0, \sigma^2I_L)$ where $\sigma^2=0.1$, for coefficient vectors. Using the first five observations, we derive the corresponding posteriors as well as the posterior predictive distributions for the last observation. Results for the first three judges are shown in Figure \ref{predictive}. The parameters in the posterior predictive distribution are shown in the upper-right. Red lines denote the actual observed values.
\begin{figure}[H]
	\centering
	\includegraphics[width=0.9\linewidth]{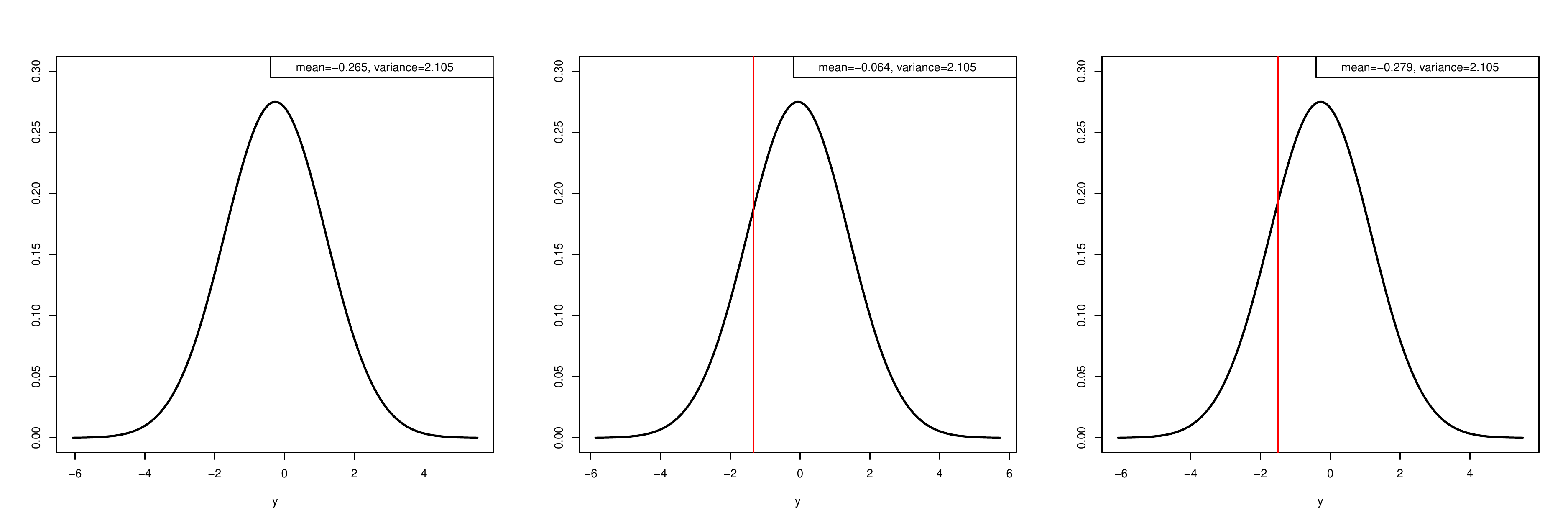}
	\caption{Posterior predictive distribution for the last observation of $Y$}\label{predictive}
\end{figure}

\subsection{Wine Data: Empirical distribution of the variables}

The Wine Quality Data Set (P. Cortez, A. Cerdeira, F. Almeida, T. Matos and J. Reis, `Wine Quality Data Set', UCI Machine Learning Repository.) contains 4\,898 observations with 11 features. The output  wine rating  is an integer variable ranging from 0 to 10 (the observed range in the data is from 3 to 9). 
 The frequency of each rating is as follows:
 \begin{table}[!ht]
 \caption{Frequencies of Wine Ratings}
 \centering
 \begin{tabular}{l | c c c c c c c }
 \toprule
 rating & 3 &   4 &   5 &   6 &   7 &   8 &   9  \\
 \midrule
 frequency & 20 & 163 & 1457 & 2198 & 880 & 175 &   5\\
 \bottomrule
 \end{tabular}
 \end{table}

Most of the wine falls into ratings 5 and 6. 

Figure \ref{histogram} shows the histograms of 11 characteristics after log transform or square root transform where appropriate in order for approximate Gaussianity to hold so that we can apply Brillinger's result. Figure \ref{quality} shows the scatterplots of wine quality versus wine characteristics.

First we use the trick of \cite{Hoadley2000} and expand the original 11 characteristic variables, denoted as $x_1, x_2, ..., x_{11}$, to a much higher dimensional set of 77 variables. These include 11 corresponding quadratics, i.e. $x_1^2, x_2^2, ..., x_{11}^2$, together with $11\times 10/2 = 55$ interactions, i.e. $x_1x_2,  ..., x_{10}x_{11}$. One of the advantages of PLS is that it can be applied in high-dimensional output and input variables that can be highly co-linear. Hence hand-coding all possible nonlinear functionals of characteristics and including them in the expanded set can be performed before using SVD to find the eigen-decomposition. 

\begin{figure}[H]
	\centering
	\includegraphics[width=0.8\linewidth]{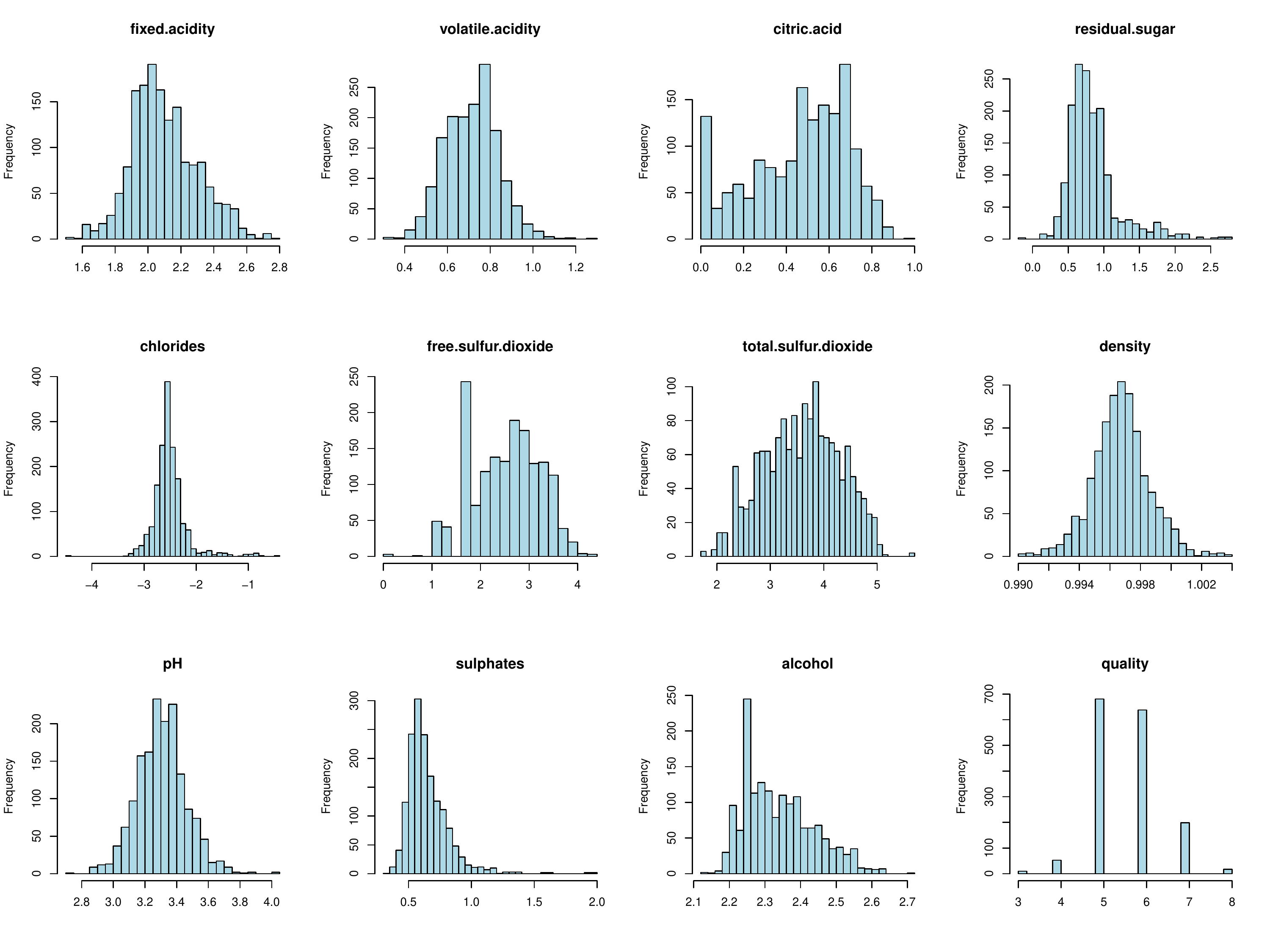}
	\caption{Histograms of wine variables after transformation}\label{histogram}
\end{figure}

\begin{figure}
	\centering
	\includegraphics[width=0.8\linewidth]{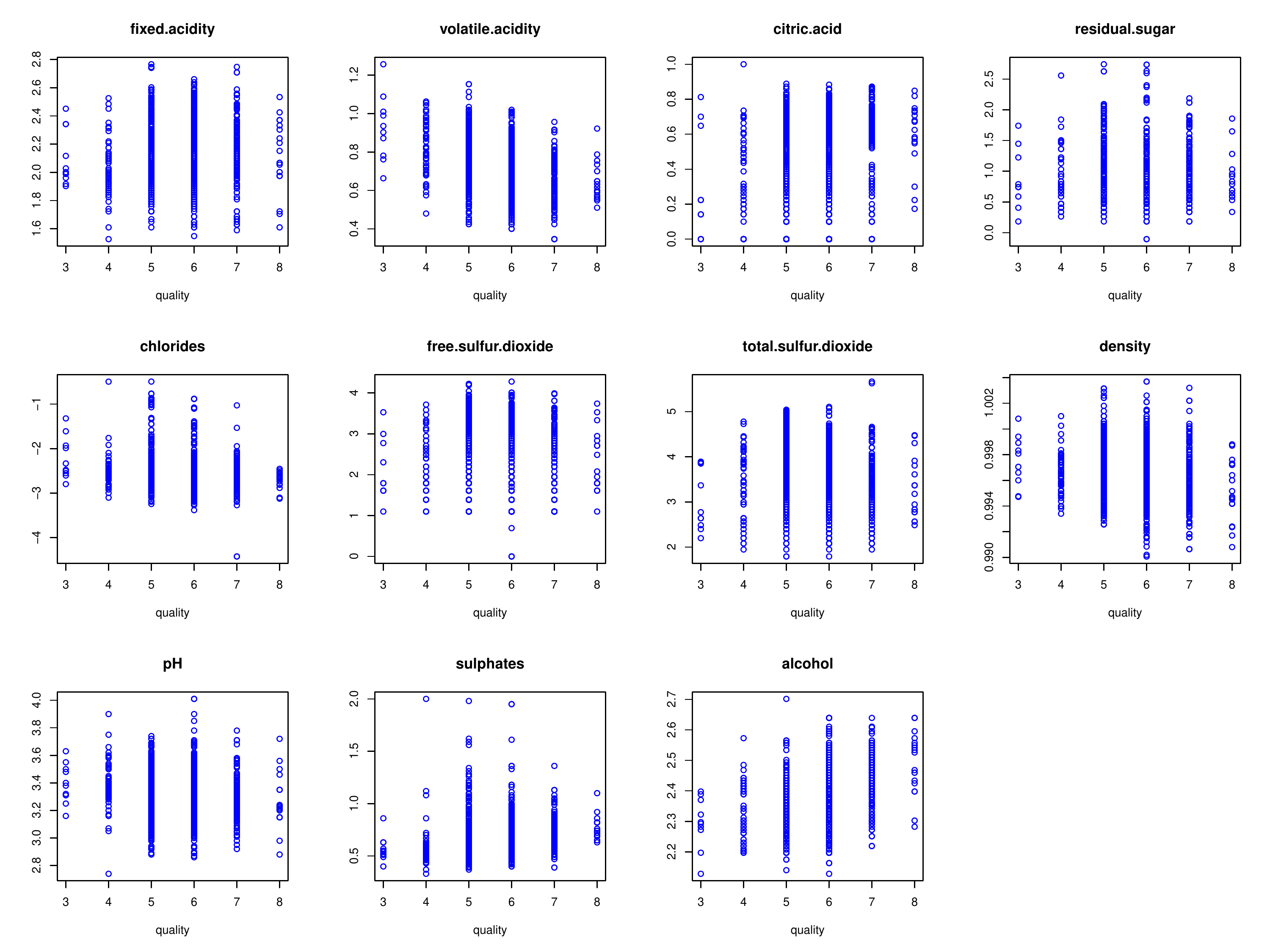}
	\caption{Wine variables versus wine quality}\label{quality}
\end{figure}

In Table \ref{compare}, we compare the in-sample adjusted $R^2$ and out-of-sample predictive accuracy of three linear models, OLS, PCR and PLS. The results for both input sets (the original 11 variables and the expanded 77 variables) are shown. For PCR, the number of components used are 5 and 15 for the original set and expanded set, respectively. For PLS, the number of components used are 5 and 10, respectively. As expected, the Hoadley trick improves the goodness of fit for all three models. It also helps with the out-of-sample classification for PCR and PLS. PLS has the best performance.

Based on the expanded variable set, we also build a neural network for quality classification (6 classes in total) and apply Brillinger's method. The architecture of our feed-forward neural network is 77-16-32-16-6. The second last layer with 16 hidden nodes is extracted as the new ``features" ($\tilde x$). We then run 6 regressions $y_i$ on $\tilde x$ to get $\hat\beta_i$'s, $i=1,2,...,6$. The nonlinear activation $g$ is estimated by smooth splines. The out-of-sample classification accuracy of the neural network is 0.889; after Brillinger's method being used, it's 0.885.

\begin{table}[ht]
	\centering
	\begin{tabular}{@{}lcc|cc@{}}
		\toprule    
		& \multicolumn{2}{c}{Reg. with 11 variables} & \multicolumn{2}{c}{Reg. with 77 variables} \\ \midrule
		& adj. $R^2$           & OOS acc.          & adj. $R^2$           & OOS acc.         \\ \midrule
		OLS                    & 0.347             & 0.59             & 0.403             & 0.575            \\
		PCR  & 0.324             & 0.58             & 0.372             & 0.595            \\
		PLS  & 0.347             & 0.595            & 0.389             & 0.6        \\ \bottomrule     
	\end{tabular}
	\caption{Comparison results for linear models}
	\label{compare}
\end{table}

\section{Discussion}\label{sec:discussion}
By merging deep learning with partial least squares, we provide a simple methodology for non-parametric input-out pattern matching. Partial least squares is a powerful statistical tool that performs data reduction and feature selection in high dimensional regression prediction. We introduce PLS-ReLU, PLS-trees and PLS-GP. Coupled with Brillinger's estimation insight from Stein's lemma and Gaussian regressors, it provides a data pre-processing tool with wide application in statistics and machine learning. 

Our procedure extends PLS by modeling the $Y$-scores ($U$) as a deep learner ($G_L$) of the input scores ($T$). This framework also provides a number of diagnostic plots to aid in building deep learning predictors. For example, the scree plot can be used to identify the number of factors. We illustrate Brillinger's result by showing how to recover the ReLU and tanh activation functions in a simulation study and we analyze two classic datasets, the orange juice preference hedonic regression (\cite{tenenhaus_pls_2005a}) and the wine quality dataset where we use the variable expansion approach of \cite{breiman2001statistical} and \cite{Hoadley2000}. 

Partial least squares can be used in many well-known machine learning models, such as kernel regression (\cite{rosipal2001kernel}, \cite{rosipal_kernel_2001}), Gaussian processes (\cite{gramacy2012gaussian}), tree models.  \cite{higdon2008computer} uses a PCA decomposition of the $Y$-variable before applying a nonlinear regression method. This can be addressed using our methodology. There are many directions for future research. For example, PLS for image processing, and extending Theorem 1 for larger classes of deep learners.

\bibliography{ref}

\end{document}